\documentclass[journal,twoside,web]{ieeecolor}

\usepackage{generic}
\usepackage{cite}
\usepackage{amsmath,amssymb,amsfonts}
\usepackage{algorithm}
\usepackage{algpseudocode}

\usepackage{listings}
\usepackage{graphicx}
\usepackage{textcomp}
\usepackage{url}
\usepackage{hyperref}
\usepackage{graphicx}
\usepackage{subfig}
\usepackage{epsfig} 
\usepackage{cite}
\usepackage{lcsys}

\newtheorem{theorem}{Theorem}[section]
\newtheorem{lemma}[theorem]{Lemma}
\newtheorem{proposition}[theorem]{Proposition}
\newtheorem{corollary}[theorem]{Corollary}

\newtheorem{remark}[theorem]{Remark}
\newtheorem{assumption}[theorem]{Assumption}

\usepackage{amsmath} 
\usepackage{amssymb}  
\usepackage{xcolor}
\usepackage{caption}
\usepackage{graphicx}
\usepackage{mathtools}
\usepackage{caption}
\usepackage{float}
\usepackage{tikz}
\usepackage{url}

\usepackage{pgfplots}
\usepackage{xcolor}
\usetikzlibrary{matrix,arrows,calc,positioning,shapes,decorations.pathreplacing}
\usepackage{graphicx}

\usepackage{amsmath} 

\usepackage{enumerate}
\usepackage[all,tips]{xy}
\SelectTips{cm}{11}

\usepackage{bm}

\newcommand{\R}{\mathbb{R}}

\usepackage{mathtools}
\usepackage{siunitx}

\begin{document}

\def\BibTeX{{\rm B\kern-.05em{\sc i\kern-.025em b}\kern-.08em
    T\kern-.1667em\lower.7ex\hbox{E}\kern-.125emX}}
\markboth{\journalname, VOL. XX, NO. XX, XXXX 2017}
{Author \MakeLowercase{\textit{et al.}}: Preparation of Papers for IEEE Control Systems Letters (August 2022)}

\title{Safety-Critical Control for Quadrotor UAVs via Decentralized Navigation Functions}

\author{Omayra Yago Nieto$^{1}$, Alexandre Anahory Simoes$^{2}$ and Leonardo Colombo$^{3}$
\thanks{$^1$ Omayra Yago Nieto is with Universidad Politécnica de Madrid, Spain. {\tt\small omayra.yago.nieto@alumnos.upm.es}}
\thanks{$^2$ A. Anahory Simoes is with the School of Science and Technology, IE University, Spain
        {\tt\small alexandre.anahory@ie.edu}}
\thanks{$^3$Centre for Automation and Robotics (CSIC-UPM), Ctra. M300 Campo Real, Km 0,200, Arganda del Rey - 28500 Madrid, Spain, {\tt\small leonardo.colombo @csic.es}.}
\thanks{The research leading to these results was supported in part by iRoboCity2030-CM, Robótica Inteligente para Ciudades Sostenibles (TEC-2024/TEC-62), funded by the Programas de Actividades I+D en Tecnologías en la Comunidad de Madrid. The authors also acknowledge financial support from Grants PID2022-137909NB-C21 funded by MCIN/AEI/10.13039/501100011033.}}

\maketitle
\thispagestyle{empty}

\begin{abstract}
We study safety-critical control for teams of quadrotor UAVs driven
by decentralized navigation functions under learned model
uncertainty. These functions generate fully actuated translational
reference forces, while quadrotors can only produce thrust along
their body-fixed vertical axes. We construct a thrust-attitude
implementation of the induced navigation forces and quantify its
error with respect to the fully actuated reference dynamics. An
aggregated robust HOCBF-QP safety filter minimally modifies the
nominal thrusts while guaranteeing pairwise collision avoidance
with high probability.
\end{abstract}

\begin{IEEEkeywords}
Safety-Critical Control, High-Order Control Barrier Functions,
Gaussian Processes, Quadrotor UAVs, Decentralized Navigation Functions.
\end{IEEEkeywords}

\section{Introduction}

Teams of quadrotor UAVs are increasingly used in coordinated tasks such
as inspection, coverage, surveillance, aerial cinematography, and
cooperative sensing. These applications require controllers that combine
goal-directed motion, collision avoidance, and robustness to aerodynamic
and modeling uncertainty. A central difficulty is that decentralized motion-planning methods often
generate ideal translational fields or forces, whereas a quadrotor can
only generate thrust along its body-fixed vertical axis. Thus, a
collision-aware reference force cannot be applied directly, but must be
implemented through thrust and attitude commands while preserving safety.

Navigation functions provide a systematic way to generate
goal-directed and collision-aware motion. Starting from classical
artificial-potential constructions~\cite{RimonKoditschek1992}, several
decentralized navigation approaches have been developed for multi-agent
systems~\cite{DimarogonasKyriakopoulos2005, VerginisXuDimarogonas2017}.
Recent work has extended this viewpoint to multi-agent systems evolving
on \(SE(3)\), yielding decentralized navigation functions that generate
fully actuated reference forces for collision-aware motion
on \(SE(3)^N\)~\cite{YagoNietoSimoesGiribetColombo2026DNFSE3Learning}.
For quadrotor UAVs, however, these reference forces are not admissible
inputs. They must first be converted into desired thrust directions,
thrust magnitudes, and attitude commands. This conversion introduces an
implementation error that must be quantified before safety-critical
constraints can be imposed on the actual underactuated dynamics.

Control barrier functions (CBFs) provide a complementary framework for
enforcing safety through forward invariance of a prescribed safe set.
CBF-based quadratic programs allow safety constraints to be combined
with nominal controllers in a minimally invasive way
~\cite{AmesTAC2017}, and high-order or exponential CBFs
extend this approach to constraints of relative degree greater than one 
~\cite{NguyenACC2016,XiaoCDC2019,TanShawCortezDimarogonas2022}.
Because collision avoidance for quadrotors is naturally a
relative-degree-two constraint with respect to thrust, HOCBFs are well
suited to this setting. Gaussian Processes (GP) have also been used to incorporate learning
uncertainty into safety-critical control, including quadrotor
learning, GP-CBF synthesis, high-order GP-CBFs, and decentralized
multirobot collision avoidance
~\cite{WangTheodorouEgerstedt2018,JagtapPappasZamani2020,AaliLiu2024,HuFuWen2023}. Related NMPC methods address safe multi-quadrotor operation under sensing and actuation constraints \cite{GoarinLiSavioloLoianno2025}.

Despite these advances, the combination of decentralized navigation
functions, quadrotor underactuation, learned model uncertainty, and
HOCBF-based safety remains largely unexplored. The main issue is that
the navigation function produces a fully actuated reference force, while
the HOCBF constraint must be enforced on the actual thrust-direction
dynamics. This paper bridges that gap. The contributions are threefold.
First, we construct a thrust--attitude implementation of decentralized
navigation forces for quadrotor teams and derive an explicit error bound
with respect to the fully actuated reference dynamics. Second, we
incorporate GP force uncertainty directly into the pairwise
relative-degree-two HOCBF constraint through an explicit robustness
bound. Third, we formulate an aggregated robust HOCBF-QP safety filter
that minimally modifies the nominal thrusts and guarantees
pairwise collision avoidance with high probability.

The remainder of the paper is organized as follows. Section~II
introduces the multi-quadrotor model and the pairwise HOCBF safety
constraints. Section~III develops the decentralized-navigation-force
implementation and derives the thrust--attitude error bound.
Section~IV constructs the aggregated robust GP-HOCBF-QP safety
filter and establishes the probabilistic safety guarantee. Section~V presents simulation results, and Section~VI concludes.

\section{System Modeling and Safety Constraints}
\label{sec:system_model}

\subsection{Multi-agent quadrotor dynamics}

Consider a team of $N$ quadrotor-type agents indexed by
$\mathcal V=\{1,\ldots,N\}$. The configuration of agent
$i\in\mathcal V$ is $g_i=(R_i,p_i)\in SO(3)\times\R^3$, where
$R_i\in SO(3)$ is the attitude and $p_i\in\mathbb R^3$ is
the position in an inertial frame. Let $v_i\in\mathbb R^3$
denote the translational velocity and let
$\Omega_i\in\mathbb R^3$ denote the body angular velocity.
The kinematics are
\begin{equation}
    \dot p_i=v_i,
    \qquad
    \dot R_i=R_i\widehat\Omega_i ,
    \label{eq:kinematics}
\end{equation}
where $\widehat{\cdot}:\mathbb R^3\to\mathfrak{so}(3)$ is
the standard hat map.

The control input is $u_i=(f_i,\tau_i)\in\mathbb R\times
\mathbb R^3$, where $f_i$ is the thrust magnitude along the
body-fixed direction $R_ie_3$ and $\tau_i$ is the body torque.
The dynamics are
\begin{align}
    m_i\dot v_i
    &=
    f_iR_ie_3+m_i g+\Delta_{f,i}(x_i),
    \label{eq:trans_dyn}
    \\
    J_i\dot\Omega_i+\Omega_i\times J_i\Omega_i
    &=
    \tau_i+\Delta_{\tau,i}(x_i),
    \label{eq:rot_dyn}
\end{align}
where $m_i>0$, $J_i\in\mathbb R^{3\times3}$,
$e_3=(0,0,1)^\top$, $g\in\mathbb R^3$ is the gravitational
acceleration vector and it is directed along $-e_{3}$, and
$ x_i=(p_i,v_i,R_i,\Omega_i)\in
    \mathbb R^3\times\mathbb R^3\times SO(3)\times\mathbb R^3$.

The terms $\Delta_{f,i}$ and $\Delta_{\tau,i}$ represent
state-dependent unknown force and torque contributions, respectively.
They account for unmodeled aerodynamic effects, parametric uncertainty,
actuator disturbances, and external perturbations. The nominal dynamics
are recovered by setting $\Delta_{f,i}=\Delta_{\tau,i}=0$. In this work,
these unknown dynamics are estimated online by learning-based methods and
their prediction errors are incorporated into the robust safety-critical
controller. The translational dynamics remain underactuated because the
control force cannot be chosen arbitrarily in $\mathbb R^3$, but is
constrained to lie in $\operatorname{span}\{R_i e_3\}$ (for details, see \cite{lee2010geometric}).
\subsection{Pairwise safety functions}

Collision-avoidance constraints are encoded by an undirected
graph $\mathcal G=(\mathcal V,\mathcal E)$. An edge
$(i,j)\in\mathcal E$ indicates that agents $i$ and $j$ must
satisfy a pairwise safety constraint. The neighbor set of agent
$i$ is $\mathcal N_i=\{j\in\mathcal V:(i,j)\in\mathcal E\}$.

For collision avoidance, each quadrotor is represented by a
spherical safety envelope centered at its position $p_i$.
Therefore, pairwise safety depends only on the relative
position between agents and not on their attitudes. For each
$(i,j)\in\mathcal E$, define
\begin{equation}
    h_{ij}(p_i,p_j)
    =
    \|p_i-p_j\|^2-d_{\min}^2,
    \label{eq:pairwise_safety_function}
\end{equation}
where $d_{\min}>0$ is the prescribed minimum separation
distance. The pairwise and global safe sets are
\begin{equation}
    \mathcal C_{ij}
    =
    \{(x_i,x_j):h_{ij}(p_i,p_j)\geq0\},
    \qquad
    \mathcal C
    =
    \bigcap_{(i,j)\in\mathcal E}\mathcal C_{ij}.
    \label{eq:safe_sets}
\end{equation}

Let $p_{ij}:=p_i-p_j$ and $v_{ij}:=v_i-v_j$. Along
\eqref{eq:kinematics}--\eqref{eq:rot_dyn},
\begin{equation}
    \dot h_{ij}
    =
    2p_{ij}^{\top}v_{ij}.
    \label{eq:h_dot}
\end{equation}
The relative acceleration entering the second derivative is
\begin{equation}
    a_{ij}
    :=
    \frac{f_i}{m_i}R_ie_3
    -
    \frac{f_j}{m_j}R_je_3
    +
    \frac{\Delta_{f,i}(x_i)}{m_i}
    -
    \frac{\Delta_{f,j}(x_j)}{m_j}.
    \label{eq:relative_acceleration}
\end{equation}
Since the common gravitational acceleration cancels from the
relative dynamics, we obtain
\begin{equation}
    \ddot h_{ij}
    =
    2\|v_{ij}\|^2
    +
    2p_{ij}^{\top}a_{ij}.
    \label{eq:h_ddot}
\end{equation}

Therefore, the control inputs do not appear in $\dot h_{ij}$,
whereas the thrust magnitudes $f_i$ and $f_j$ appear explicitly
in $\ddot h_{ij}$ through the underactuated force directions
$R_ie_3$ and $R_je_3$. Thus, except at configurations where \(p_{ij}\) is orthogonal to both
available thrust directions, \(h_{ij}\) has relative degree two with
respect to the thrust inputs. This motivates the use of high-order
control barrier functions for enforcing~\eqref{eq:pairwise_safety_function}
\cite{NguyenACC2016,XiaoCDC2019}. The body torques
$\tau_i,\tau_j$ do not enter $\ddot h_{ij}$ directly; they affect
safety indirectly by changing the attitudes $R_i,R_j$ and,
therefore, the future thrust directions. This is precisely the
quadrotor underactuation structure given in \cite{lee2010geometric}.

\subsection{High-order barrier constraints}

Since the pairwise safety function has relative degree two
with respect to the thrust inputs, we employ the high-order
zeroing-CBF framework to enforce~\eqref{eq:pairwise_safety_function}.
This framework extends barrier-certificate and zeroing-CBF
ideas for forward invariance to constraints whose control input
appears only after repeated differentiation
\cite{AmesTAC2017,
NguyenACC2016,XiaoCDC2019,TanShawCortezDimarogonas2022}.

Let $\alpha_1$ and
$\alpha_2$ be differentiable extended class-$\mathcal K$ functions and define
\begin{equation}
    \varphi_{0,ij}:=h_{ij},
    \qquad
    \varphi_{1,ij}:=\dot h_{ij}+\alpha_1(h_{ij}),
    \label{eq:hocbf_phi01}
\end{equation}
\begin{equation}
    \varphi_{2,ij}:=
    \dot\varphi_{1,ij}+\alpha_2(\varphi_{1,ij}).
    \label{eq:hocbf_phi2}
\end{equation}
Equivalently,
\begin{equation}
    \varphi_{2,ij}
    =
    \ddot h_{ij}
    +
    \alpha_1'(h_{ij})\dot h_{ij}
    +
    \alpha_2\!\left(\dot h_{ij}+\alpha_1(h_{ij})\right).
    \label{eq:hocbf_explicit}
\end{equation}
In the common linear choice
$\alpha_1(s)=\ell_1s$, $\alpha_2(s)=\ell_2s$,
\eqref{eq:hocbf_explicit} becomes
\[
    \ddot h_{ij}
    +
    (\ell_1+\ell_2)\dot h_{ij}
    +
    \ell_1\ell_2 h_{ij}
    \geq 0 .
\]
If \(\phi_{0,ij}(0)\ge0\) and \(\phi_{1,ij}(0)\ge0\), then
enforcing \(\phi_{2,ij}(x,u)\ge0\) guarantees forward invariance of the
high-order safe set
\[
\mathcal C^H_{ij}:=\{(x_i,x_j):\phi_{0,ij}\ge0,\ \phi_{1,ij}\ge0\}.
\]
Since \(\phi_{0,ij}=h_{ij}\), this implies \(h_{ij}(t)\ge0\) for all
\(t\ge0\)~\cite{NguyenACC2016,XiaoCDC2019,TanShawCortezDimarogonas2022}.
The robust enforcement of this HOCBF condition under model
uncertainty and underactuated thrust-direction constraints is
developed in Section~\ref{sec:robust_safety_filter}.

\section{Decentralized Navigation Functions and Thrust-Attitude Control}
\label{sec:navigation_control}

\subsection{Decentralized navigation functions}

For each agent $i\in\mathcal V$, let
$g_{d,i}=(R_{d,i},p_{d,i})\in SE(3)$ denote its desired
configuration. 
Choose \(d_{\rm nav}<d_{\min}\) and define
\[
\mathcal Q_{\rm nav}:=
\{g\in SE(3)^N:\|p_{ij}\|>d_{\rm nav},
\ \forall(i,j)\in\mathcal E\},
\]
where \(p_{ij}:=p_i-p_j\). \(\mathcal Q_{\rm nav}\) is a
configuration-space object, whereas the safe set \(\mathcal C\)
in~\eqref{eq:safe_sets} is defined on the state space. Since
\(d_{\rm nav}<d_{\min}\), the navigation function is defined on an
open neighborhood of the CBF safe set boundary
\(\|p_{ij}\|=d_{\min}\).

We use the decentralized navigation function on $SE(3)^N$
introduced in~\cite{YagoNietoSimoesGiribetColombo2026DNFSE3Learning}.
This construction extends classical navigation functions
\cite{RimonKoditschek1992} and earlier decentralized
multi-agent navigation methods
\cite{DimarogonasKyriakopoulos2005} to fully actuated
dynamical agents evolving on $SE(3)$. For agent $i$, this
function is denoted by
$\psi_i:\mathcal Q_{\rm nav}\rightarrow\mathbb R_{\geq0}$.
It is built from relation proximity functions, relation
verification functions, and a collision-anticipation correction.

\begin{assumption}
\label{ass:navigation_function}
For each $i\in\mathcal V$, the function
$\psi_i:\mathcal Q_{\rm nav}\rightarrow\mathbb R_{\geq0}$
satisfies:
\begin{enumerate}
    \item $\psi_i(g)\geq0$ for all $g\in\mathcal Q_{\rm nav}$,
    and $\psi_i(g)=0$ when $g_i=g_{d,i}$ and no active collision
relation involving agent $i$ is present;

    \item $\nabla_{p_i}\psi_i$ depends only on locally available
    configuration information associated with agent $i$ and
    the agents involved in its collision relations;

    \item the negative gradient $-\nabla_{p_i}\psi_i$ is
    attractive toward the desired configuration and repulsive
    with respect to collision configurations.
\end{enumerate}
\end{assumption}

A representative expression of the decentralized navigation
function is
\begin{equation}
    \psi_i(g_1,\ldots,g_N)
    =
    \frac{
        \gamma_{d,i}(g_i)+c_i(G_i)
    }{
        \left(
            \big(\gamma_{d,i}(g_i)+c_i(G_i)\big)^\kappa
            +
            G_i
        \right)^{1/\kappa}
    },
    \label{eq:decentralized_navigation_function}
\end{equation}
where $\kappa>0$ is a design parameter, $\gamma_{d,i}$ is a
goal function, $G_i$ collects the relation-verification
functions associated with possible collision relations involving
agent $i$, and $c_i(G_i)$ is the collision-anticipation
correction term. This correction prevents agent $i$ from
remaining passive at its desired configuration when a potential
collision relation involving that agent is still active. The
detailed construction of $G_i$ and $c_i$ is given
in~\cite{YagoNietoSimoesGiribetColombo2026DNFSE3Learning}.

The function $\psi_i$ is used as a decentralized artificial
potential on $\mathcal Q_{\rm nav}$. Its negative position
gradient determines the nominal collision-aware direction in
which agent $i$ should move. If the translational dynamics
were fully actuated, this direction could be imposed directly
as a feedback force. For a quadrotor UAV, however, the
available force is constrained to the body-fixed thrust
direction $R_ie_3$. Therefore, the control action induced by
$\psi_i$ must be implemented through a thrust magnitude and
a desired attitude before it can be applied to the dynamics.
\subsection{Fully actuated reference dynamics}

The decentralized navigation function induces the ideal
translational feedback force
\begin{equation}
    F_i^{\rm nav}(x)
    :=
    -k_p\nabla_{p_i}\psi_i(g)
    -
    k_v v_i,
    \label{eq:nav_force}
\end{equation}
where $k_p,k_v>0$ are navigation and damping gains,
respectively, and $g=(g_1,\ldots,g_N)$. The first term is the
negative position gradient of the decentralized navigation
function and therefore combines attraction to the desired
configuration with repulsion from collision configurations.
The second term introduces linear damping in the translational
velocity. For time-varying reference positions, the damping term can be
replaced by $-k_v(v_i-v_{d,i})$, with $v_{d,i}=\dot p_{d,i}$;
the thrust--attitude implementation and the results below are
unchanged.

If agent $i$ were fully actuated in translation, the reference
closed-loop dynamics associated with~\eqref{eq:nav_force}
would be
\begin{equation}
    m_i\dot v_i
    =
    F_i^{\rm nav}(x).
    \label{eq:fully_actuated_reference}
\end{equation}
Thus, $F_i^{\rm nav}$ should be interpreted as a virtual
fully actuated force generated by the decentralized navigation
function. This is precisely the type of force that can be
applied in the fully actuated $SE(3)$ model of
\cite{YagoNietoSimoesGiribetColombo2026DNFSE3Learning}.

For a quadrotor UAV, however, the admissible translational
force is not arbitrary. The actual dynamics satisfy
\[
    m_i\dot v_i
    =
    f_iR_ie_3+m_ig+\Delta_{f,i}(x_i),
\]
so the control force is constrained to the one-dimensional
subspace $\operatorname{span}\{R_ie_3\}$. Consequently,
the reference force~\eqref{eq:nav_force} cannot be applied
directly. It must first be converted into a desired thrust
direction, a thrust magnitude, and an attitude command. Therefore, the decentralized navigation function affects the
quadrotor dynamics only through the virtual force
$F_i^{\rm nav}$, which is implemented below by means of
thrust and attitude commands.

\subsection{Thrust-attitude implementation}

We compensate the unknown force and torque terms using the
learning-based framework for underactuated vehicles with
uncertain dynamics in~\cite{BeckersColomboHirchePappas2021CSL}. Each agent uses collected data to obtain mean predictions $\mu_{f,i}(x_i)$
and $\mu_{\tau,i}(x_i)$ of $\Delta_{f,i}(x_i)$ and
$\Delta_{\tau,i}(x_i)$, respectively. The prediction uncertainty is incorporated into the robust
HOCBF constraints in Section~\ref{sec:robust_safety_filter}. 

We define the desired total acceleration
\begin{equation}
    a_{d,i}(x)
    :=
    \frac{1}{m_i}
    \big(F_i^{\rm nav}(x)-\mu_{f,i}(x_i)\big)
    -
    g.
    \label{eq:desired_acceleration}
\end{equation}
Equivalently, if this acceleration were implemented exactly
and the learned force model were exact, then
\[
    m_i a_{d,i}+m_i g+\mu_{f,i}=F_i^{\rm nav}.
\]
Whenever \(a_{d,i}(x) \neq 0\), define
\(b_{3c,i}(x) := a_{d,i}(x)/\|a_{d,i}(x)\|\). Let
\(R_{c,i}\in SO(3)\) be any smooth commanded attitude satisfying
\(R_{c,i}e_3=b_{3c,i}\); the remaining degree of freedom may be
assigned through a desired yaw direction.

Since the actual thrust is applied along the current direction
$R_ie_3$, we choose
\begin{equation}
    f_i^{\rm nom}
    =
    m_i a_{d,i}^{\top}R_ie_3 .
    \label{eq:nominal_thrust}
\end{equation}
Thus, the applied force $f_i^{\rm nom}R_ie_3$ is the
orthogonal projection of $m_i a_{d,i}$ onto
$\operatorname{span}\{R_ie_3\}$.

The commanded attitude \(R_{c,i}\) is tracked with the geometric
controller of~\cite{lee2010geometric}. Let
\(\widehat\Omega_{c,i}=R_{c,i}^{\top}\dot R_{c,i}\), and let
\(e_{R,i},e_{\Omega,i}\) be the standard tracking errors. We use
\[
\begin{aligned}
\tau_i^{\rm nom}
={}&-k_Re_{R,i}-k_\Omega e_{\Omega,i}
+\Omega_i\times J_i\Omega_i\\
&-J_i\big(
\widehat\Omega_iR_i^\top R_{c,i}\Omega_{c,i}
-R_i^\top R_{c,i}\dot\Omega_{c,i}
\big)
-\mu_{\tau,i}(x_i).
\end{aligned}
\]
The corresponding nominal input is
\(u_i^{\rm nom}=(f_i^{\rm nom},\tau_i^{\rm nom})\).

The following result quantifies the error between the
fully actuated reference dynamics and their thrust-attitude
implementation on the quadrotor.

\begin{proposition}
\label{prop:implementation_error}
Consider agent $i$ with dynamics~\eqref{eq:trans_dyn}
under the nominal thrust~\eqref{eq:nominal_thrust}. Suppose
that $a_{d,i}(x)\neq0$ and
\begin{equation}
    \|R_ie_3-b_{3c,i}\|\leq\varepsilon_i .
    \label{eq:thrust_direction_error}
\end{equation}
Then
\begin{equation}
    m_i\dot v_i
    =
    F_i^{\rm nav}(x)
    +
    e_i^{\rm att}(x)
    +
    e_i^{\rm GP}(x_i),
    \label{eq:implemented_nav_dynamics}
\end{equation}
where
\begin{align}
    e_i^{\rm GP}(x_i)
    &:=
    \Delta_{f,i}(x_i)-\mu_{f,i}(x_i),
    \label{eq:gp_error}
    \\
    e_i^{\rm att}(x)
    &:=
    m_i
    \big[
        (a_{d,i}^{\top}R_ie_3)R_ie_3
        -
        a_{d,i}
    \big].
    \label{eq:att_error}
\end{align} 
Moreover, if \(\theta_i\in[0,\pi]\) is the angle between
\(R_ie_3\) and \(b_{3c,i}\), 
\begin{equation}
    \|e_i^{\rm att}(x)\|
    =
    m_i\|a_{d,i}(x)\|\sin\theta_i
    \leq
    m_i\|a_{d,i}(x)\|\varepsilon_i .
    \label{eq:att_error_bound}
\end{equation}
Consequently, if $e_i^{\rm GP}=0$ and $R_ie_3=b_{3c,i}$,
then the quadrotor translational dynamics coincide with the
fully actuated reference dynamics~\eqref{eq:fully_actuated_reference}.
\end{proposition}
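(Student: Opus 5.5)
The proof is a direct algebraic substitution followed by an elementary geometric estimate. First, I will insert the nominal thrust \eqref{eq:nominal_thrust} into the translational dynamics \eqref{eq:trans_dyn}, which gives
\[
m_i\dot v_i = m_i(a_{d,i}^\top R_ie_3)R_ie_3 + m_i g + \Delta_{f,i}(x_i).
\]
I will then add and subtract $m_i a_{d,i}$ and $\mu_{f,i}(x_i)$. From the definition \eqref{eq:desired_acceleration}, $m_i a_{d,i} + m_i g + \mu_{f,i} = F_i^{\rm nav}$. The remaining terms are exactly $m_i[(a_{d,i}^\top R_ie_3)R_ie_3 - a_{d,i}] = e_i^{\rm att}$ and $\Delta_{f,i}-\mu_{f,i} = e_i^{\rm GP}$, which yields \eqref{eq:implemented_nav_dynamics}.

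For the norm identity in \eqref{eq:att_error_bound}, I will note that $(a_{d,i}^\top R_ie_3)R_ie_3 - a_{d,i} = -(I - R_ie_3e_3^\top R_i^\top)a_{d,i}$. This is minus the orthogonal projection of $a_{d,i}$ onto the plane orthogonal to the unit vector $R_ie_3$. Since $a_{d,i} = \|a_{d,i}\|\, b_{3c,i}$ with $b_{3c,i}$ a unit vector, the norm of this projection is $\|a_{d,i}\|\sin\theta_i$. This can be checked via Pythagoras: $\|a\|^2 = (a^\top n)^2 + \|(I-nn^\top)a\|^2$ with $a^\top n = \|a\|\cos\theta_i$. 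Here $\sin\theta_i \ge 0$ because $\theta_i\in[0,\pi]$.

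The inequality is the only point needing care. For unit vectors at angle $\theta_i$, I will use $\|R_ie_3 - b_{3c,i}\| = 2\sin(\theta_i/2)$. Then $\sin\theta_i = 2\sin(\theta_i/2)\cos(\theta_i/2) \le 2\sin(\theta_i/2) = \|R_ie_3-b_{3c,i}\| \le \varepsilon_i$, where the first inequality holds because $\cos(\theta_i/2)\in[0,1]$ on $[0,\pi]$. Multiplying by $m_i\|a_{d,i}\|$ gives the bound.

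The final claim follows immediately. If $R_ie_3 = b_{3c,i}$, then $\theta_i = 0$, so $e_i^{\rm att}=0$ by the identity just proved. Together with $e_i^{\rm GP}=0$, \eqref{eq:implemented_nav_dynamics} reduces to \eqref{eq:fully_actuated_reference}.

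The main obstacle, though minor, is recognizing that the chord length bounds the sine of the angle rather than the angle itself. The half-angle identity handles this cleanly, including obtuse angles.
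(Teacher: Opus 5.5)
Your proposal is correct and follows essentially the same route as the paper's proof. You substitute the nominal thrust and use $m_i a_{d,i}+m_ig+\mu_{f,i}=F_i^{\rm nav}$ to identify $e_i^{\rm att}$ and $e_i^{\rm GP}$. You then read off the norm as $m_i\|a_{d,i}\|\sin\theta_i$ from the orthogonal component of $b_{3c,i}$ relative to $R_ie_3$, and bound $\sin\theta_i$ by the chord length $2\sin(\theta_i/2)$, exactly as the paper does. Your Pythagoras check and the explicit half-angle step simply spell out details the paper leaves implicit.
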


\begin{proof}
Using~\eqref{eq:trans_dyn} and~\eqref{eq:nominal_thrust},
\[
m_i\dot v_i
=
m_i(a_{d,i}^{\top}R_ie_3)R_ie_3
+
m_ig
+
\Delta_{f,i}(x_i).
\]
Since \(m_i a_{d,i}+m_ig+\mu_{f,i}=F_i^{\rm nav}\), this gives
\eqref{eq:implemented_nav_dynamics} with
\eqref{eq:gp_error}--\eqref{eq:att_error}. Let \(u_i:=R_ie_3\) and \(b_i:=b_{3c,i}\). Since
\(a_{d,i}=\|a_{d,i}\|b_i\),
$e_i^{\rm att}
=
m_i\|a_{d,i}\|\big((b_i^\top u_i)u_i-b_i\big)$. The vector in parentheses is the component of \(b_i\) orthogonal to
\(u_i\), up to sign. Hence, if \(\theta_i\) is the angle between
\(u_i\) and \(b_i\),
$\|e_i^{\rm att}\|
=
m_i\|a_{d,i}\|\sin\theta_i$.
For unit vectors, \(\|u_i-b_i\|=2\sin(\theta_i/2)\), and therefore
\(\sin\theta_i\le \|u_i-b_i\|\). Using
\eqref{eq:thrust_direction_error} gives
\eqref{eq:att_error_bound}.
\end{proof}

\begin{remark}
The term \(e_i^{\rm GP}\) denotes only the translational
force-estimation error. Torque uncertainty is represented separately
by \(\Delta_{\tau,i}-\mu_{\tau,i}\) in the attitude dynamics and is
compensated in the nominal torque, but it does not enter the
second-order pairwise distance HOCBF margin. Thus,
Proposition~3.2 separates the translational learning error from the
error induced by underactuation and attitude tracking.
\end{remark}

\section{Robust Safety-Critical Control via HOCBF-QP}
\label{sec:robust_safety_filter}


\subsection{Learning uncertainty and robust HOCBF constraints}
\label{subsec:learning_uncertainty_hocbf}

For each agent $i$, let $\mu_{f,i}(x_i)$ denote the learned mean
prediction of $\Delta_{f,i}(x_i)$, and let $\mathcal X_i$ be a compact
operating domain containing its closed-loop trajectory. Assume the
learning model provides a high-probability confidence bound.
\begin{assumption}
\label{ass:gp_force_bound}
For each agent $i\in\mathcal V$, there exists a known function
$\bar\Delta_i:\mathcal X_i\rightarrow\mathbb R_{\geq0}$ and
a confidence level $\delta_i\in(0,1)$ such that, with probability at least $1-\delta_i$,
\begin{equation}
    \|\Delta_{f,i}(x_i)-\mu_{f,i}(x_i)\|
    \leq
    \bar\Delta_i(x_i),
    \qquad
    \forall x_i\in\mathcal X_i.
    \label{eq:force_confidence_bound}
\end{equation}
\end{assumption}

\medskip

For componentwise GP regression, we use
$\bar\Delta_i(x_i)=\sqrt{\beta_i}\|\sigma_i(x_i)\|$,
where $\sigma_i$ collects the posterior standard deviations
of the force components and $\beta_i>0$ determines the
confidence level under the adopted GP assumptions. The analysis below only requires
Assumption~\ref{ass:gp_force_bound}.

For each $(i,j)\in\mathcal E$, define the learned relative
acceleration
\begin{equation}
    \widehat a_{ij}
    :=
    \frac{f_i}{m_i}R_ie_3
    -
    \frac{f_j}{m_j}R_je_3
    +
    \frac{\mu_{f,i}(x_i)}{m_i}
    -
    \frac{\mu_{f,j}(x_j)}{m_j}.
    \label{eq:learned_relative_acceleration}
\end{equation}

Accordingly, the learned estimate of $\ddot h_{ij}$ is
\begin{equation}
    \widehat{\ddot h}_{ij}(x,f)
    :=
    2\|v_{ij}\|^2
    +
    2p_{ij}^{\top}\widehat a_{ij}.
    \label{eq:learned_hddot}
\end{equation}
The learned HOCBF expression is then
\begin{equation}
\begin{aligned}
    \widehat\varphi_{2,ij}(x,f)
    &:=
    \widehat{\ddot h}_{ij}(x,f)
    +
    \alpha_1'(h_{ij})\dot h_{ij} \\
    &\quad+
    \alpha_2\!\left(
        \dot h_{ij}+\alpha_1(h_{ij})
    \right).
\end{aligned}
\label{eq:learned_hocbf}
\end{equation}

The discrepancy between the true and learned HOCBF
expressions is bounded by
\begin{equation}
    \rho_{ij}(x)
    :=
    2\|p_{ij}\|
    \left(
        \frac{\bar\Delta_i(x_i)}{m_i}
        +
        \frac{\bar\Delta_j(x_j)}{m_j}
    \right).
    \label{eq:robustness_bound}
\end{equation}

\begin{lemma}\label{lem:robust_hocbf_deviation} 
Suppose that Assumption~\ref{ass:gp_force_bound} holds. Then, for
every $(i,j)\in\mathcal E$, the difference between the true and learned HOCBF expressions is independent of $f$ and satisfy
\begin{equation}
    \varphi_{2,ij}(x,f)
    \geq
    \widehat\varphi_{2,ij}(x,f)
    -
    \rho_{ij}(x),
    \label{eq:robust_hocbf_bound}
\end{equation}
for every admissible thrust vector $f$, with probability at least $1-\delta_i-\delta_j$.
\end{lemma}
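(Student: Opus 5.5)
The plan is to compute the difference $\varphi_{2,ij}-\widehat\varphi_{2,ij}$ explicitly, observe that everything except the uncertainty terms cancels, and then bound what remains with Cauchy--Schwarz and Assumption~\ref{ass:gp_force_bound}. By \eqref{eq:hocbf_explicit} and \eqref{eq:learned_hocbf}, both expressions share the terms $\alpha_1'(h_{ij})\dot h_{ij}$ and $\alpha_2(\dot h_{ij}+\alpha_1(h_{ij}))$. These depend only on $(p_i,p_j,v_i,v_j)$, because $\dot h_{ij}=2p_{ij}^\top v_{ij}$ by \eqref{eq:h_dot}. Hence
\[
\varphi_{2,ij}-\widehat\varphi_{2,ij}=\ddot h_{ij}-\widehat{\ddot h}_{ij}.
\]
Using \eqref{eq:h_ddot} and \eqref{eq:learned_hddot}, the $2\|v_{ij}\|^2$ terms cancel and the difference becomes $2p_{ij}^\top(a_{ij}-\widehat a_{ij})$.

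Comparing \eqref{eq:relative_acceleration} with \eqref{eq:learned_relative_acceleration}, the thrust terms $\frac{f_i}{m_i}R_ie_3-\frac{f_j}{m_j}R_je_3$ appear identically in both and cancel. This leaves
\[
\varphi_{2,ij}-\widehat\varphi_{2,ij}=2p_{ij}^\top\Big(\frac{e_i^{\rm GP}(x_i)}{m_i}-\frac{e_j^{\rm GP}(x_j)}{m_j}\Big),
\]
with $e^{\rm GP}$ as in \eqref{eq:gp_error}. This expression contains no $f$, which establishes the independence claim for every admissible thrust vector.

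For the bound, apply Cauchy--Schwarz and the triangle inequality:
\[
\big|\varphi_{2,ij}-\widehat\varphi_{2,ij}\big|\le 2\|p_{ij}\|\Big(\frac{\|e_i^{\rm GP}\|}{m_i}+\frac{\|e_j^{\rm GP}\|}{m_j}\Big).
\]
Let $E_i$ denote the event on which \eqref{eq:force_confidence_bound} holds for agent $i$ over all of $\mathcal X_i$, and define $E_j$ likewise. On $E_i\cap E_j$ the right-hand side is at most $\rho_{ij}(x)$ from \eqref{eq:robustness_bound}. This gives $\varphi_{2,ij}\ge\widehat\varphi_{2,ij}-\rho_{ij}$, together with the symmetric upper bound.

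The step that needs the most care is the probability statement. By the union bound, $\mathbb P(E_i^c\cup E_j^c)\le\delta_i+\delta_j$, so $E_i\cap E_j$ holds with probability at least $1-\delta_i-\delta_j$. This requires no independence between the agents' learning models. Because Assumption~\ref{ass:gp_force_bound} is uniform in $x_i\in\mathcal X_i$, the inequality then holds simultaneously for all states in $\mathcal X_i\times\mathcal X_j$ and all $f$, not merely pointwise. The remaining point to state explicitly is that the state is taken in $\mathcal X_i\times\mathcal X_j$, which is the standing assumption on the closed-loop trajectory.
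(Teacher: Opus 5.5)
Your proposal is correct and follows the paper's proof essentially step for step. Both isolate $\ddot h_{ij}-\widehat{\ddot h}_{ij}$, cancel the thrust terms, bound the residual GP term by $\rho_{ij}$ via Cauchy--Schwarz and the triangle inequality, and conclude with a union bound over the confidence events of agents $i$ and $j$; your explicit treatment of the $f$-independence and of restricting the state to $\mathcal X_i\times\mathcal X_j$ just makes precise what the paper leaves implicit.
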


\begin{proof}
The only difference between $\varphi_{2,ij}$ and
$\widehat\varphi_{2,ij}$ comes from the unknown force
in $\ddot h_{ij}$. From~\eqref{eq:h_ddot} and
\eqref{eq:learned_hddot},
\[
\begin{aligned}
    \ddot h_{ij}-\widehat{\ddot h}_{ij}
    &=
    2p_{ij}^{\top}
    \left(
        \frac{\Delta_{f,i}-\mu_{f,i}}{m_i}
        -
        \frac{\Delta_{f,j}-\mu_{f,j}}{m_j}
    \right).
\end{aligned}
\]
Therefore, by Cauchy-Schwarz and the triangle inequality and
Assumption~\ref{ass:gp_force_bound},
\[
\begin{aligned}
    \left|
        \ddot h_{ij}
        -
        \widehat{\ddot h}_{ij}
    \right|
    &\leq
    2\|p_{ij}\|
    \left(
        \frac{\|\Delta_{f,i}-\mu_{f,i}\|}{m_i}
        +
        \frac{\|\Delta_{f,j}-\mu_{f,j}\|}{m_j}
    \right)
    \\
    &\leq
    2\|p_{ij}\|
    \left(
        \frac{\bar\Delta_i(x_i)}{m_i}
        +
        \frac{\bar\Delta_j(x_j)}{m_j}
    \right)
    =
    \rho_{ij}(x).
\end{aligned}
\]
Since all other terms in $\varphi_{2,ij}$ and
$\widehat\varphi_{2,ij}$ are identical, the result follows.
The probability claim follows by a union bound over the learning events of agents $i$ and $j$.
\end{proof}

\subsection{Aggregated robust safety-critical QP}
\label{subsec:aggregated_qp}

The nominal thrust $f_i^{\rm nom}$ introduced in
Section~III implements the decentralized navigation force through
the current thrust direction $R_i e_3$. We now minimally modify
these nominal thrust magnitudes in order to enforce the robust
pairwise HOCBF conditions derived above.

Let $f:=\operatorname{col}(f_1,\ldots,f_N)$, $f^{\rm nom}:=
    \operatorname{col}(f_1^{\rm nom},\ldots,f_N^{\rm nom})$, and define the admissible thrust set
$\displaystyle{\mathcal F
    :=
    \prod_{i\in\mathcal V}
    [f_i^{\min},f_i^{\max}]}$. At each time $t$, the robust safety filter is defined by the
aggregated quadratic program
\begin{align}
    f^\star
    =
    \arg\min_{f\in\mathcal F}
    \quad&
    \frac{1}{2}
    \sum_{i\in\mathcal V}
    \bigl(f_i-f_i^{\rm nom}\bigr)^2
    \label{eq:aggregated_qp}
    \\
    \text{s.t.}\quad&
    \widehat\varphi_{2,ij}(x,f)
    \geq
    \rho_{ij}(x),
    \qquad
    (i,j)\in\mathcal E .
    \nonumber
\end{align}
The applied input of agent $i$ is then
\begin{equation}
    u_i^\star
    =
    \bigl(f_i^\star,\tau_i^{\rm nom}\bigr).
    \label{eq:safe_applied_input}
\end{equation} Since the objective is strictly convex and the constraints are
affine in $f$, whenever~\eqref{eq:aggregated_qp} is feasible
its minimizer is unique.

Problem~\eqref{eq:aggregated_qp} is solved simultaneously over the
thrust magnitudes of all agents, and therefore constitutes an
aggregated safety layer. The nominal navigation forces
$F_i^{\rm nav}$ and the corresponding thrust--attitude controllers
remain decentralized, whereas the HOCBF correction couples the
agents through the pairwise constraints associated with
$\mathcal E$. Each such constraint involves only the two thrust
magnitudes $f_i$ and $f_j$ corresponding to its endpoints.
\begin{remark}
Each constraint in~\eqref{eq:aggregated_qp} is affine in the thrust
magnitudes $f_i$ and $f_j$, since $\widehat\varphi_{2,ij}$ depends
linearly on $f_iR_ie_3$ and $f_jR_je_3$. The body torques do not
enter $\ddot h_{ij}$ directly; they affect safety through the
evolution of the thrust directions $R_ie_3$ and $R_je_3$.
Consequently, the safety filter modifies only the thrust magnitudes,
while the attitude torques remain those of the nominal geometric
controller in Section~\ref{sec:navigation_control}.
\end{remark}

\begin{remark}
The safety guarantee below is conditional on feasibility of
the robust HOCBF-QP. Feasibility may be limited by the thrust
bounds and, in particular, can deteriorate when $p_{ij}$ is
nearly orthogonal to both thrust directions $R_ie_3$ and
$R_je_3$, so that the second-order distance constraint has
poor instantaneous control authority. Higher-order or
torque-aware filters for such singular cases are left for
future work.
\end{remark}

\subsection{Probabilistic safety guarantee}
\label{subsec:probabilistic_safety}

Let $\mathcal C^{\rm H}
:=
\bigcap_{(i,j)\in\mathcal E}
\mathcal C_{ij}^{\rm H}$ be the aggregated safe set.

\begin{theorem}
\label{thm:probabilistic_safety}
Consider the multi-quadrotor system~\eqref{eq:kinematics}--%
\eqref{eq:rot_dyn} with pairwise safety functions
\eqref{eq:pairwise_safety_function}. Suppose that
Assumption~\ref{ass:gp_force_bound} holds for all agents on
the domain of the closed-loop trajectories and that
$x(0)\in\mathcal C^{\rm H}$. Assume that, for every
$t\geq0$, the aggregated QP~\eqref{eq:aggregated_qp} is
feasible and its optimizer $f^\star(x)$ is locally Lipschitz
in the state. Then $\mathcal C^{\rm H}$ is forward invariant
with probability at least $1-\sum_{i\in\mathcal V}\delta_i$. Consequently, every closed-loop trajectory initialized in
$\mathcal C^{\rm H}$ satisfies $\|p_i(t)-p_j(t)\|\geq d_{\min}$, $(i,j)\in\mathcal E$,
for all $t\geq0$, with the same probability.
\end{theorem}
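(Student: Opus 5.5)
The proof conditions on a single good event, reduces the closed loop to a deterministic Lipschitz system, and then applies the standard HOCBF invariance argument to each pair. Let $E_i$ be the event in Assumption~\ref{ass:gp_force_bound} for agent $i$, and set $E:=\bigcap_{i\in\mathcal V}E_i$. By the union bound, $\mathbb P(E)\ge 1-\sum_{i\in\mathcal V}\delta_i$. The rest of the argument is deterministic and holds on $E$.

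On $E$, the bound \eqref{eq:force_confidence_bound} holds for every agent at every state along the trajectory, since the trajectory stays in the domain $\mathcal X_i$ by hypothesis. Repeating the computation in the proof of Lemma~\ref{lem:robust_hocbf_deviation} then gives, for every edge $(i,j)\in\mathcal E$ simultaneously,
$\varphi_{2,ij}(x,f)\ge\widehat\varphi_{2,ij}(x,f)-\rho_{ij}(x)$ for all $f$. I would not invoke the lemma's per-pair probability separately for each edge. Intersecting those per-edge events could overcount, whereas conditioning once on $E$ keeps the total failure probability at $\sum_i\delta_i$.

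Next comes the closed loop itself. With $f^\star(x)$ locally Lipschitz, and the nominal torque and the terms $\Delta_{f,i},\Delta_{\tau,i}$ assumed sufficiently regular (locally Lipschitz), the closed-loop vector field is locally Lipschitz. Hence there is a unique solution on a maximal interval $[0,T)$. Feasibility of \eqref{eq:aggregated_qp} at every $t$ gives $\widehat\varphi_{2,ij}(x(t),f^\star(x(t)))\ge\rho_{ij}(x(t))$. Combined with the bound above, on $E$ this yields
\[
\dot\varphi_{1,ij}(t)+\alpha_2\bigl(\varphi_{1,ij}(t)\bigr)=\varphi_{2,ij}\bigl(x(t),f^\star\bigr)\ge0
\]
for every edge. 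Here I use that $\dot\varphi_{1,ij}$ along the true dynamics equals $\ddot h_{ij}+\alpha_1'(h_{ij})\dot h_{ij}$, which is identity \eqref{eq:hocbf_explicit}.

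The invariance step proceeds in two comparison arguments. First, $\varphi_{1,ij}(0)\ge0$ and $\dot\varphi_{1,ij}\ge-\alpha_2(\varphi_{1,ij})$. By the comparison lemma with the scalar ODE $\dot y=-\alpha_2(y)$, $y(0)=\varphi_{1,ij}(0)\ge0$, whose solution stays nonnegative because $\alpha_2(0)=0$, we get $\varphi_{1,ij}(t)\ge0$ on $[0,T)$. Second, $\varphi_{1,ij}\ge0$ means $\dot h_{ij}\ge-\alpha_1(h_{ij})$, and $h_{ij}(0)\ge0$. The same argument gives $h_{ij}(t)\ge0$. Together these show $x(t)\in\mathcal C^{\rm H}$ on $[0,T)$.

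I expect the main obstacle to be handling the comparison lemma carefully. This needs uniqueness for the scalar comparison ODE, for which local Lipschitzness of $\alpha_1,\alpha_2$ would be assumed, since they are differentiable. It also needs an argument that $T=\infty$. Here I would lean on the standing hypothesis that the trajectory is defined for all $t\ge0$ and remains in $\mathcal X_i$, rather than proving completeness independently. Finally, $h_{ij}\ge0$ is equivalent to $\|p_i-p_j\|\ge d_{\min}$, which gives the separation claim on $E$, and hence with probability at least $1-\sum_i\delta_i$.
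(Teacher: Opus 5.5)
Your proposal is correct and follows essentially the same route as the paper: condition once on the joint confidence event $\bigcap_i E_i$ via the union bound, combine QP feasibility with the deviation bound of Lemma~\ref{lem:robust_hocbf_deviation} to get $\varphi_{2,ij}\ge0$ for every edge along the closed loop, and apply the comparison principle twice to obtain $\varphi_{1,ij}\ge0$ and then $h_{ij}\ge0$. Your extra care about existence, uniqueness and completeness of closed-loop solutions is a refinement the paper leaves implicit; also, the comparison step does not need uniqueness for $\dot y=-\alpha(y)$, since $\dot y\ge-\alpha(y)>0$ whenever $y<0$ already prevents a sign change.
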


\begin{proof}
For each $i\in\mathcal V$, let
\[
    \mathcal E_i
    :=
    \left\{
    \|\Delta_{f,i}(x_i)-\mu_{f,i}(x_i)\|
    \leq \bar\Delta_i(x_i),
    \ \forall x_i\in\mathcal X_i
    \right\}
\]
denote the confidence event of agent $i$, and define the joint
event $\mathcal E_{\rm GP}
    :=
    \bigcap_{i\in\mathcal V}\mathcal E_i$.
By Assumption~\ref{ass:gp_force_bound} and the union bound, $\mathbb P(\mathcal E_{\rm GP})
    \geq
    1-\sum_{i\in\mathcal V}\delta_i$.

We prove forward invariance conditioned on
$\mathcal E_{\rm GP}$. Fix an arbitrary edge
$(i,j)\in\mathcal E$. Since
$f^\star$ is feasible for~\eqref{eq:aggregated_qp},
\begin{equation}
    \widehat\varphi_{2,ij}(x,f^\star)
    \geq
    \rho_{ij}(x).
    \label{eq:proof_robust_constraint}
\end{equation}
On $\mathcal E_{\rm GP}$, the confidence bounds of both
agents $i$ and $j$ hold. Hence,
Lemma~\ref{lem:robust_hocbf_deviation} and
\eqref{eq:proof_robust_constraint} imply
\[
\begin{aligned}
    \varphi_{2,ij}(x,f^\star)
    &\geq
    \widehat\varphi_{2,ij}(x,f^\star)
    -
    \rho_{ij}(x)\geq0 .
\end{aligned}
\]

From the definitions of $\varphi_{2,ij}$ and $\varphi_{1,ij}$, we have 
$\varphi_{2,ij}
    =
    \dot\varphi_{1,ij}
    +
    \alpha_2(\varphi_{1,ij})$, and therefore
$\dot\varphi_{1,ij}
    \geq
    -\alpha_2(\varphi_{1,ij})$. Since $x(0)\in\mathcal C^{\rm H}$ implies
$\varphi_{1,ij}(0)\geq0$, the comparison principle yields $\varphi_{1,ij}(t)\geq0, t\geq0$. Similarly,
$\varphi_{1,ij}
    =
    \dot\varphi_{0,ij}
    +
    \alpha_1(\varphi_{0,ij})$, so that $\dot\varphi_{0,ij}
    \geq
    -\alpha_1(\varphi_{0,ij})$. Because $\varphi_{0,ij}(0)\geq0$, another application of the
comparison principle gives
$\varphi_{0,ij}(t)\geq0, t\geq0$. Thus $\mathcal C_{ij}^{\rm H}$ is forward invariant.

Since the edge $(i,j)$ was arbitrary and the same joint event
$\mathcal E_{\rm GP}$ guarantees the above inequalities for
all agents simultaneously,
$\mathcal C^{\rm H}
    =
    \bigcap_{(i,j)\in\mathcal E}
    \mathcal C_{ij}^{\rm H}$ is forward invariant on $\mathcal E_{\rm GP}$. Finally,
$\varphi_{0,ij}=h_{ij}$, and hence
$\mathcal C^{\rm H}\subseteq\mathcal C$. Therefore,
$h_{ij}(t)\geq0$ implies $\|p_i(t)-p_j(t)\|\geq d_{\min}$ for every $(i,j)\in\mathcal E$ and all $t\geq0$.
Combining this deterministic implication on
$\mathcal E_{\rm GP}$ with
$\mathbb P(\mathcal E_{\rm GP})
\geq1-\sum_i\delta_i$
proves the claim.
\end{proof}
\begin{corollary}
\label{cor:minimal_intervention}
Suppose that the hypotheses of
Theorem~\ref{thm:probabilistic_safety} hold. If, on an interval
$I\subset\mathbb R_{\geq0}$, the nominal thrust vector
$f^{\rm nom}(x)$ is feasible for
\eqref{eq:aggregated_qp}, then
$f^\star(t)=f^{\rm nom}(t)$, $t\in I$.
Consequently,
$u_i^\star(t)=u_i^{\rm nom}(t)$ for every $i\in\mathcal V$
and $t\in I$, and the closed-loop translational dynamics
coincide with the thrust--attitude implementation described in
Proposition~\ref{prop:implementation_error}.
\end{corollary}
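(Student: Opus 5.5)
The argument is pointwise in time and rests only on the structure of the QP~\eqref{eq:aggregated_qp}. Fix $t\in I$ and write $x=x(t)$. The objective $\frac12\sum_i(f_i-f_i^{\rm nom})^2=\frac12\|f-f^{\rm nom}(x)\|^2$ is nonnegative on $\mathcal F$ and vanishes exactly at $f=f^{\rm nom}(x)$. By hypothesis, $f^{\rm nom}(x)$ is feasible, which means two things: it lies in $\mathcal F$, and it satisfies $\widehat\varphi_{2,ij}(x,f^{\rm nom})\geq\rho_{ij}(x)$ for every $(i,j)\in\mathcal E$. It is therefore a feasible point at which the objective attains its global lower bound $0$, so it is a minimizer.

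Uniqueness then identifies $f^\star$ with $f^{\rm nom}$. As already observed after~\eqref{eq:safe_applied_input}, the objective is strictly convex and the feasible set is convex, being the intersection of the box $\mathcal F$ with finitely many half-spaces that are affine in $f$. Hence the minimizer is unique, and $f^\star(x)=f^{\rm nom}(x)$. Since $t\in I$ was arbitrary, $f^\star(t)=f^{\rm nom}(t)$ on $I$.

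By~\eqref{eq:safe_applied_input}, the applied torque is always $\tau_i^{\rm nom}$. Combined with the thrust identity, this gives $u_i^\star=(f_i^{\rm nom},\tau_i^{\rm nom})=u_i^{\rm nom}$ on $I$ for every $i$. Substituting $f_i=f_i^{\rm nom}=m_ia_{d,i}^\top R_ie_3$ into~\eqref{eq:trans_dyn} puts us exactly in the setting of Proposition~\ref{prop:implementation_error}. That proposition then yields~\eqref{eq:implemented_nav_dynamics} on $I$, with the error terms~\eqref{eq:gp_error}--\eqref{eq:att_error}. If the bound~\eqref{eq:att_error_bound} is also wanted, one needs the proposition's side conditions $a_{d,i}\neq0$ and~\eqref{eq:thrust_direction_error}. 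The identity~\eqref{eq:implemented_nav_dynamics} itself is pure algebra and holds regardless.

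No step is a genuine obstacle. The only point needing care is that feasibility of $f^{\rm nom}$ must include the thrust bounds, $f^{\rm nom}\in\mathcal F$, and not only the robust HOCBF constraints; I will read the hypothesis "feasible for~\eqref{eq:aggregated_qp}" in that full sense. The hypotheses of Theorem~\ref{thm:probabilistic_safety} are not needed for the identity $f^\star=f^{\rm nom}$. They serve only to guarantee that the closed loop is well defined and that safety is retained on $I$.
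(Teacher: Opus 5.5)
Your proposal is correct and follows essentially the same route as the paper: feasibility of $f^{\rm nom}$ means the nonnegative objective attains its minimum value zero there, uniqueness of the minimizer gives $f^\star=f^{\rm nom}$, and then \eqref{eq:safe_applied_input} together with Proposition~\ref{prop:implementation_error} yields the remaining claims. Your extra remarks are accurate refinements that do not change the argument: feasibility must include $f^{\rm nom}\in\mathcal F$, the side conditions are needed only for the bound \eqref{eq:att_error_bound}, and the fact that the objective vanishes only at $f^{\rm nom}$ already gives uniqueness without invoking strict convexity.
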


\begin{proof}
If $f^{\rm nom}$ is feasible, the objective in
\eqref{eq:aggregated_qp} attains its minimum value zero at
$f=f^{\rm nom}$. Since the objective is strictly convex,
$f^\star=f^{\rm nom}$ is the unique minimizer. Equation
\eqref{eq:safe_applied_input} then gives
$u_i^\star=u_i^{\rm nom}$ for every agent, and the last claim
follows from Proposition~\ref{prop:implementation_error}.
\end{proof}

\begin{remark}
The navigation law of Section~\ref{sec:navigation_control} is decentralized, whereas the
safety filter~\eqref{eq:aggregated_qp} is solved jointly. Distributed
realizations of CBF-constrained multi-agent optimization have been
developed using auxiliary and consensus dynamics
\cite{TanDimarogonasDistributed,MestresDistributedCBF}.
In our setting, however, bounded scalar thrusts and multiple
simultaneous HOCBF constraints share inputs whose effectiveness
depends on $p_{ij}^{\top}R_i e_3$; we therefore retain the aggregated
QP. Recursively feasible distributed implementations and the possible
emergence of safety-filter-induced undesired equilibria
\cite{TanDimarogonasUndesired,MestresSafetyFilters} are left for
future work.
\end{remark}

\section{Simulation Results}
\label{sec:simulations}

Consider a six-UAV cinematography-inspired reconfiguration in which
the vehicles exchange camera viewpoints around a moving scene,
producing crossing nominal trajectories. Camera-heading variables are
used only for visualization, while safety is enforced through pairwise
position constraints. The translational dynamics are affected by an
unknown spatially varying wind disturbance learned online by each UAV.

The navigation force~\eqref{eq:nav_force} uses the DNF
\eqref{eq:decentralized_navigation_function} of~\cite{YagoNietoSimoesGiribetColombo2026DNFSE3Learning}, with $\kappa=1$,
and is mapped into thrust and attitude commands through the
underactuated tracking controller. Each UAV learns the translational disturbance from residual
samples
$y_i=m_i\dot v_i-f_iR_ie_3-m_ig$ using
$[p_i^\top,v_i^\top]^\top\in\mathbb R^6$ as input.
Three independent scalar GPs are used for the force components, with
kernel
$k(x,x')=1.0\,k_{\rm RBF}(x,x';\ell=2.0)+k_W(x,x')$,
noise level $2.5\times10^{-4}$, posterior standard-deviation
floor $2.2\times10^{-2}\,{\rm N}$ per force component,
$\beta=18$, update period $0.5\,{\rm s}$, a minimum of $18$ samples before fitting, and at most $300$ samples per UAV. All UAVs have mass \(m_i=1.30\,\mathrm{kg}\); thrust and torque
are bounded by \(0\leq f_i\leq28\,\mathrm{N}\) and
\(|\tau_{i,k}|\leq9\,\mathrm{N\,m}\), and the dynamics are integrated
with time step \(0.01\,\mathrm{s}\).

\begin{figure}[h!]
    \centering
    \includegraphics[width=\columnwidth]{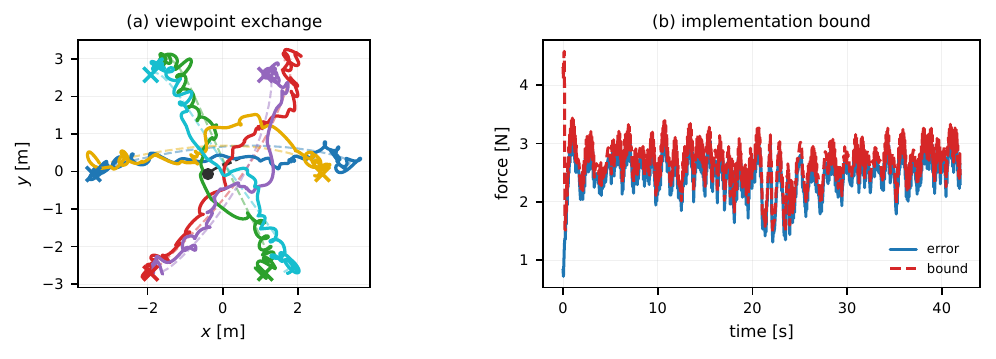}
    \caption{Baseline GP-aided thrust--attitude implementation.
(a) Viewpoint exchange. See video at: \url{https://youtu.be/XcRLlHjsYzQ}. (b) Total implementation error and
theoretical bound.}
    \label{fig:section3_baseline}
\end{figure}

Fig~\ref{fig:section3_baseline} validates the thrust--attitude
implementation of Section~III. The six UAVs complete the viewpoint
exchange while the total implementation error remains below its
theoretical bound throughout the maneuver, with maxima
\(3.2396\,\mathrm{N}\) and \(4.7892\,\mathrm{N}\), respectively.
The maximum residual in~\eqref{eq:implemented_nav_dynamics}
is \(4.68\times10^{-15}\,\mathrm{N}\), confirming consistency with
Proposition~\ref{prop:implementation_error}.


\begin{figure}[h!]
    \centering
    \includegraphics[width=\columnwidth]{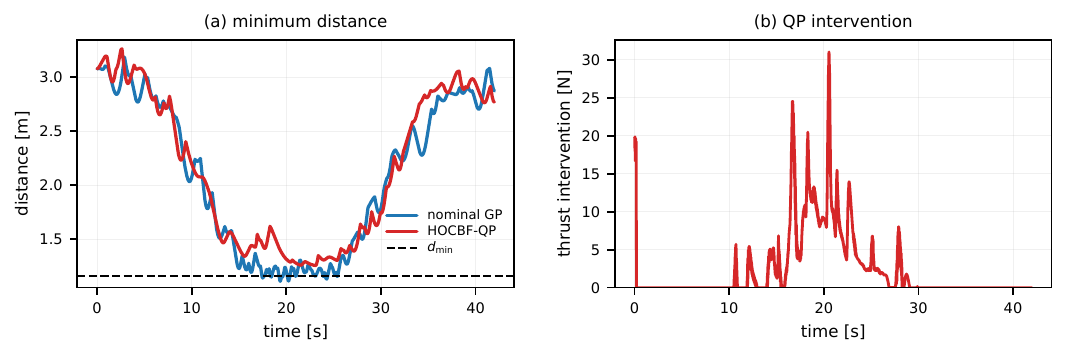}
    \caption{Robust HOCBF-QP safety correction.
(a) Minimum pairwise distance for the nominal and safe controllers. (b)  Team thrust correction $\|f^\star-f^{\rm nom}\|_2$.}
    \label{fig:section4_hocbf}
\end{figure}

We next use the same task and local GP models and activate the
aggregated robust HOCBF-QP of Section~IV. We set
\(d_{\min}=1.16\,\mathrm{m}\), \(\ell_1=1.6\), \(\ell_2=2.0\), and use
the full robustness bound \(\rho_{ij}\) in~\eqref{eq:robustness_bound}.
Fig~\ref{fig:section4_hocbf} shows that the nominal GP-compensated
controller violates the prescribed separation, reaching
\(1.1093\,\mathrm{m}\), whereas the HOCBF-QP maintains a minimum
distance of \(1.2540\,\mathrm{m}\). The QP is feasible throughout, with
\(100\%\) feasibility and solver success ratios and a maximum constraint
violation of \(2.68\times10^{-12}\). The mean and maximum values of
\(\|f^\star-f^{\rm nom}\|_2\) are \(2.3870\,\mathrm{N}\) and
\(31.0348\,\mathrm{N}\), respectively. The realized GP error remains within the adopted confidence bound
throughout the robust run, with a minimum coverage margin of
\(2.18\times10^{-2}\,\mathrm{N}\).

To isolate the role of the robustness term, we compare the aggregated
robust HOCBF-QP with a nonrobust version obtained by setting
\(\rho_{ij}=0\). Both controllers use the same nominal navigation law,
local GP compensation, and thrust--attitude implementation.
Fig~\ref{fig:robustness_ablation} shows that the nonrobust QP
maintains the prescribed distance, with a minimum of
\(1.1762\,\mathrm{m}\), but violates the GP-aware tightened certificate,
reaching a maximum deficit of \(6.9773\). In contrast, the robust QP
maintains a minimum distance of \(1.2540\,\mathrm{m}\) and keeps the
certificate deficit below \(2.68\times10^{-12}\), up to numerical
precision. Thus, the robustness margin is required to enforce the
GP-aware tightened HOCBF condition underlying
Theorem~\ref{thm:probabilistic_safety}.

\begin{figure}[!t]
    \centering
    \includegraphics[width=\columnwidth]{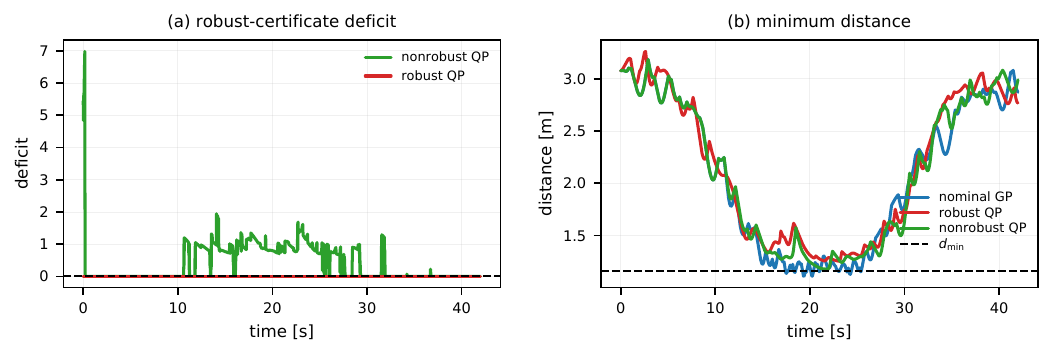}
    \caption{Effect of the GP robustness margin.
(a) Robust-certificate deficit for the nonrobust and robust HOCBF-QPs.
(b) Minimum pairwise distance for the nominal GP, nonrobust, and robust
controllers.}
    \label{fig:robustness_ablation}
\end{figure}

\section{Conclusions}
We developed a safety-critical control framework for quadrotor teams
combining decentralized navigation functions with an aggregated robust
HOCBF-QP safety layer under learned model uncertainty. The approach
implements the nominal navigation forces through geometric
thrust--attitude control, quantifies the resulting implementation error,
and incorporates GP force uncertainty into explicit pairwise robustness
margins. The resulting safety filter minimally modifies the nominal
thrusts and guarantees probabilistic collision avoidance whenever the
robust QP remains feasible. Future work will address distributed
realizations of the coupled safety QP under bounded thrust, as well as
torque-aware treatments of thrust-direction singularities and
experimental validation.

\bibliographystyle{IEEEtran}
\bibliography{autosam}

\end{document}